\let\color@begingroup\relax
   \let\color@endgroup\relax}{}%
\def\fix@ieeecolor@hbox#1{%
  \hbox{\color@begingroup#1\color@endgroup}}
\patchcmd\@makecaption{\hbox}{\fix@ieeecolor@hbox}{}{\FAILED}
\patchcmd\@makecaption{\hbox}{\fix@ieeecolor@hbox}{}{\FAILED}
\def\BibTeX{{\rm B\kern-.05em{\sc i\kern-.025em b}\kern-.08em
    T\kern-.1667em\lower.7ex\hbox{E}\kern-.125emX}}
\begin{document}
\pagestyle{empty}
\title{Stability of multiplexed NCS based on an epsilon-greedy algorithm for communication selection}
\author{Harsh Oza$^{1}$,  Irinel-Constantin Mor\u{a}rescu$^{2}$, Vineeth S. Varma$^{2}$, and Ravi Banavar$^{1}$
\thanks{$^{1}$ The authors are with Systems $\&$ Control Engineering, IIT Bombay,
Powai, Mumbai 400076, India.
Email:{\tt\small \{harsh.oza,banavar\} @iitb.ac.in}.
$^{2}$ The authors are with Universite de Lorraine, CNRS, CRAN,
F-54000 Nancy, France and also associated with the Automation Department, Technical University of Cluj-Napoca, 400114 Cluj-Napoca, Romania.
Email:  {\tt\small  \{constantin.morarescu, vineeth.satheeskumar-varma\}@univ-lorraine.fr}}
\thanks{The authors thank the support of the Indo-French Centre for the Promotion of Advanced Research (IFCPRA). The work of I.C. Mor\u{a}rescu$^{2}$, V.S. Varma has also been supported by project DECIDE funded under the PNRR I8 scheme by the Romanian Ministry of Research.}
\thanks{\textcolor{blue}{A preliminary version of this article has been submitted to IEEE Control Systems articles.}}
}
\maketitle
\thispagestyle{empty}
\begin{abstract}
In this letter, we study a Networked Control System (NCS) with multiplexed communication and Bernoulli packet drops. Multiplexed communication refers to the constraint that transmission of a control signal and an observation signal cannot occur simultaneously due to the limited bandwidth. First, we propose an $\varepsilon$-greedy algorithm for the selection of the communication sequence that also ensures Mean Square Stability (MSS). We formulate the system as a Markovian Jump Linear System (MJLS) and provide the necessary conditions for MSS in terms of Linear Matrix Inequalities (LMIs) that need to be satisfied for three corner cases. We prove that the system is MSS for any convex combination of these three corner cases. Furthermore, we propose to use the $\varepsilon$-greedy algorithm with the $\varepsilon$ that satisfies MSS conditions for training a Deep $\qfactor$ Network (DQN). The DQN is used to obtain an optimal communication sequence that minimizes a quadratic cost. We validate our approach with a numerical example that shows the efficacy of our method in comparison to the round-robin and a random scheme.
\end{abstract}

\begin{IEEEkeywords}
Networked control system, Markovian Jump Linear System, Deep Reinforcement Learning
\end{IEEEkeywords}


%
\section{Introduction}\label{sec: MSS intro}
 A Networked Control System (NCS) is a system consisting of a plant, a controller, and a communication network. NCS plays an important role in various industries such as chemical plants, power systems, and warehouse management \cite{bemporad2010networked,ge2017distributed}. In an NCS, the communication between the plant and the controller, as well as between the controller and the actuator, often occurs over a wireless network. Various challenges exist in such systems, namely, limited resources (bandwidth), delays, packet drops, or adversarial attacks \cite{elia2004bode,heemels2010networked,dolk2017event,sandberg2022secure}. These uncertainties can affect the plant's performance, making it important to study these scenarios. Our focus is on an NCS with limited bandwidth, formulated as a multiplexing constraint on transmitting control and observation signals and considering random packet drops. We aim to find a switching policy for selecting a communication direction that ensures Mean Square Stability (MSS) and then find the optimal communication sequence that minimizes a quadratic performance measure.

In an early piece of work, Athans et al. proposed an uncertainty threshold principle \cite{athans1977uncertainty}. This principle states that optimum long-range control of a dynamical system with uncertainty parameters is feasible if and only if the uncertainty does not exceed a certain threshold. This forms the basis of subsequent work which gives stability conditions in different scenarios. 
In \cite{imer2004optimal}, linear systems controlled over a network face packet drop uncertainties, and are modeled as Bernoulli random variables, with independent drops in both communication channels. The necessary conditions for MSS and optimal control solutions using dynamic programming are then provided. Schenato et al. generalize this work by considering noisy measurements and giving stronger conditions for the existence of the solution \cite{schenato2007foundations}. They also show that the separation principle holds for the TCP protocol. Other studies focus on the design of a Kalman filter for wireless networks \cite{liu2004kalman, sinopoli2004kalman} and stability conditions for systems with packet drops \cite{you2011mean}. An alternate approach for random packet drops involves sending multiple copies \cite{mesquita2012redundant}, and event-triggered policies for nonlinear systems with packet drops are developed in \cite{varma2023transmission}.

While stability and optimality problems are addressed in the literature, the stability and optimal scheduling problems with multiplexing in control and observation have not been completely investigated. For instance, \cite{schenato2007foundations} discusses multiplexing and packet drops but not optimal network selection, while \cite{maity2021optimal} considers a joint strategy for optimal selection and control of an NCS, but only for sensor signals. Major directions of research incorporate bandwidth constraints as i) multiplexing in multiple sensor signals, e.g., \cite{sinopoli2004kalman}, ii) multiplexing in sensor and control signals, e.g., \cite{schenato2007foundations}. In addition, other communication uncertainties such as packet drops and delays are addressed \cite{zhang2012network}. The main objective of these attempts is to find an optimal control strategy or to find an optimal policy for the selection of communication channels, or both see, e.g., \cite{molin2009lqg}. Leong et al. address the boundedness of error covariance objective in a multiplexed sensor information system with packet drops \cite{leong2020deep}. A stability condition is established based on the packet drop probability, and then an optimal sequence of communication is found by training a DQN with a $\varepsilon$
- greedy algorithm.

In this letter, our contributions are as follows:
\begin{enumerate}[label = \roman*)]
    \item We propose a modified $\varepsilon$-greedy algorithm for the selection of the direction of communication (transmit or receive.)
    \item We establish the necessary conditions for the MSS of an NCS with multiplexed communication and packet drops.
    \item We provide an optimal switching policy using Deep $\qfactor$ Learning that utilizes the proposed $\varepsilon$-greedy algorithm and ensures MSS, not just after training the NN, but also in the training phase.   
\end{enumerate}

The rest of the paper is organized as follows. Section \ref{sec: MSS problem setup} introduces the problem setup and communication constraints. In Section \ref{sec: MSS switching strategy}, we propose a switching strategy and formulate the problem as a Markovian Jump Linear System (MJLS). We provide necessary conditions for MSS in Section \ref{sec: MSS necc conditions}. Next, we formulate the optimal communication selection problem and provide a solution based on deep reinforcement learning in Section \ref{sec: MSS optimal}. Finally in Section \ref{sec: MSS numerical}, we validate our results on a numerical example.
\section{Problem Setup}\label{sec: MSS problem setup}
\subsection{Plant and Controller Model}
Consider a closed-loop discrete-time linear system 
\begin{equation}\label{eq: plant model}
    \begin{aligned}
        \state_{\Instant+1} &= A \state_{\Instant} + B \control_{\Instant}  \\ 
        \op_{\Instant} &= C \state_{\Instant}
    \end{aligned}
\end{equation}
for all $\Instant \in \Z_{\geq 0}$, where $\state_{\Instant} \in \R[\StDim]$ is the state, $\control_{\Instant} \in \R[\ConDim]$ is the control input and $\op_{\Instant} \in \R[\OpDim]$ is the output at $\Instant^{\tth}$ instant. We make the following assumptions regarding the original closed-loop system.
\begin{assumption}
    The pair $(A, B)$ is controllable and the pair $(A, C)$ is observable.
\end{assumption}
\begin{assumption}
    There exists a state feedback controller of the form
    \begin{equation} \label{eq: closed loop controller}
        \control_{\Instant} = K \state_{\Instant}
    \end{equation} that stabilizes the system \eqref{eq: plant model}.
\end{assumption}

\subsection{Networked System Model}
In this letter, we are interested in an application where the plant and the controller are remotely located. The communication between the plant and the controller occurs over a wireless communication network. The networked system is illustrated in Fig. \ref{fig: NCS schematic}. The networked system dynamics can be written as \begin{equation}\label{eq: networked model}
    \begin{aligned}
        \state_{\Instant+1} &= A \state_{\Instant} + B \hat{\control}_{\Instant} \\ 
        \op_{\Instant} &= C \state_{\Instant}
    \end{aligned}
\end{equation} 
where $\hat{\control}_{\Instant}$ denotes the networked version of the control signal. We proceed by emulation of the controller \eqref{eq: closed loop controller} and use the controller as 
\begin{equation}
    \control_{\Instant} = K \hat{\state}_{\Instant}
\end{equation}
where $\hat{\state}_{\Instant}$ denotes the estimates of the state at the controller end. A more detailed explanation of these quantities is presented later in this section.
\begin{figure}[!htbp]
    \centering
    \begin{tikzpicture}[scale=0.8]
 
\node [draw,
    minimum width=1cm,
    minimum height=1cm, align = left,
]  (controller) at (0,0) {Controller \\ $\control_{\Instant} = K \hat{\state}_{\Instant}$};

\node [draw,
    minimum width=1cm,
    minimum height=1cm, align= center
]  (predictor) at (2,-3) { \small Predictor \\ \small $\hat{\state}_{\Instant} = \begin{cases}
     \state_{\Instant} \\
     A \hat{\state}_{\Instant-1} + B \hat{\control}_{\Instant-1}
\end{cases} $
};
 
\node[cloud, draw,minimum width=1cm, 
    minimum height=1cm] (network) at (3,0) {\small{Network}};

\node [draw,
    minimum width=1cm, 
    minimum height=1cm,
    align = center,
] (system) at (7,0) { 
   \small Plant \\ \small $\state_{\Instant+1} = A \state_{\Instant} + B \hat{\control}_{\Instant} $ \\ 
    $\op_{\Instant} = C \state_{\Instant}  $};
    
\draw[-stealth] (1,0.5) -- (1.7,0.5) node[midway, above]{$\control_{\Instant}$};

\draw[dashed] (1.7,0.5) -- (4,0.5) node[midway, above]{};

\draw[-stealth] (4,0.5) -- (5.1,0.5) node[midway, above]{$\hat{\control}_{\Instant}$};

\draw[-stealth] (5.1,-0.5) -- (4.3,-0.5) node[midway, above]{$\state_{\Instant}$};

\draw[dashed] (4.3,-0.5) -- (3.7,-0.5) node[midway, above]{};

\draw[dashed] (3.7,-0.5) -- (3.7,-1.2) node[midway, above]{};

\draw[-stealth] (3.7,-1.2) -- (3.7,-2.2) node[midway, right]{};


\draw[- stealth] (0,-2.2) -- (0,-0.6) node[midway, right]{$\hat{\state}_{\Instant}$};

    \end{tikzpicture}
    \caption{Schematic of a Networked Control System with information multiplexing and packet drops in the network}
    \label{fig: NCS schematic}
\end{figure}
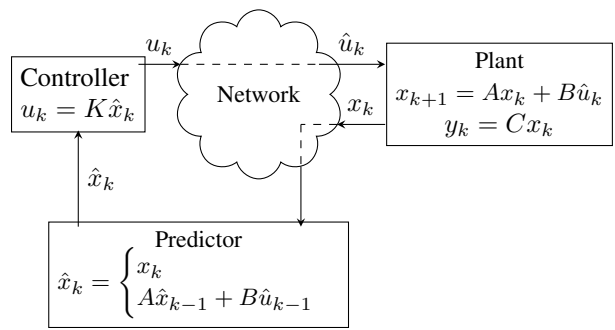
Motivated by real-world communication constraints in the form of bandwidth limitations, we consider a communication constraint as described below. At any time instant, the network scheduler has three choices:
\begin{enumerate}[label = \roman*)]
    \item transmit the control input from the controller to the plant or,
    \item transmit the measured signal from the plant to the predictor or,
    \item not communicate at all.
\end{enumerate}
These three choices are encapsulated in the form of a switch variable $\switch_{\Instant}$
that takes values in a discrete set $\AdmSwitchSet \coloneqq \{-1,0,1\}$
where,
\begin{equation}\label{eq: switch}
\switch_{\Instant} \coloneqq
    \begin{cases}
    1 & \text{if the control is transmitted;}\\
    -1 & \text{if the observation is transmitted;}\\
    0 & \text{if there is no communication.}
    \end{cases}
\end{equation}

We also consider lossy communication in the sense of a packet drop scenario. Consider $\gamma_{k}$ to denote the packet drop event, modeled as independent Bernoulli random variables with probabilities:
\begin{equation}
    \begin{aligned}
        \prob \left( \gamma_{\Instant } = 1 \right) = \delta \quad \text{and} \quad
    \prob \left( \gamma_{\Instant } = 0  \right) = 1 - \delta
    \end{aligned}
\end{equation}
where $\gamma_{\Instant} =1 $ indicates a successful packet transmission and $\gamma_{\Instant} = 0$ indicates failure.

Based on the switching and the packet drop assumptions, the transmitted control information through the network is written as
\begin{equation}\label{eq: transmitted control}
\hat{\control}_{\Instant} \coloneqq
    \begin{cases}
    \control_{\Instant}  & \text{if } \switch_{\Instant}=1 \text{ and } \gamma_{\Instant}=1; \\
    \hat{\control}_{\Instant-1} &  \text{otherwise}.
    \end{cases}
\end{equation}
The coarse estimate of the state at the controlled end is
\begin{equation}
    \hat{\state}_{\Instant} = 
    \begin{cases}
         \state_{\Instant} & \text{if } \switch_{\Instant} = -1 \text{ and } \gamma_{\Instant}=1; \\
        A \hat{\state}_{\Instant-1} + B \hat{\control}_{\Instant-1} & \text{otherwise}
    \end{cases}.
\end{equation}
Define the concatenated state as 
\begin{equation} \label{eq: concatenated state}
    \constate_{\Instant} \coloneqq \begin{pmatrix}
        \state_{\Instant}\\
        \hat{\state}_{\Instant-1} \\
        \hat{\control}_{\Instant-1}
    \end{pmatrix}  
\end{equation}
with $\hat{\state}_{-1} = \state_{0}$ and $\hat{\control}_{\Instant-1} = \bf{0}_{\ConDim}$ respectively.
The overall model of the system is written as
\begin{equation}\label{eq: overall system}
    \constate_{\Instant+1} = 
    \begin{cases}
         \sysmat_{1} \constate_{\Instant}   & \text{if } \switch_{\Instant} = 1 \text{ and } \gamma_{\Instant}=1; \\ 
         \sysmat_{-1} \constate_{\Instant} & \text{if } \switch_{\Instant} = -1 \text{ and } \gamma_{\Instant}=1; \\
         \sysmat_{0} \constate_{\Instant}  & \text{otherwise }
    \end{cases}
\end{equation}
with 
\begin{equation}
    \sysmat_{1} \coloneqq \begin{pmatrix}
        A & BKA & BKB \\
        0 & \idmat{\StDim} & 0\\
        0 &0 & \idmat{\ConDim}
    \end{pmatrix},
\end{equation}
\begin{equation}
    \sysmat_{-1} \coloneqq \begin{pmatrix}
        A & 0 & B \\
        0 & \idmat{\StDim} & 0\\
        0 &0 & \idmat{\ConDim}
    \end{pmatrix}, 
\end{equation}
and 
\begin{equation}
    \sysmat_{0} \coloneqq \begin{pmatrix}
        A & 0 & B \\
        0 & \idmat{\StDim} & 0\\
        0 & 0 & \idmat{\ConDim}
    \end{pmatrix}.
\end{equation}
\section{Switching Strategy}\label{sec: MSS switching strategy}
In this section, we first present the $\varepsilon$ -greedy strategy for switching that ensures MSS. Next, we present the generalized switching probabilities using the $\varepsilon$-greedy algorithm. Lastly, we formulate the system as a Markov Jump Linear System (MJLS), discuss modes of operation in the MJLS, and state our objective. 
\subsection{$\varepsilon$- Greedy Algorithm for Switching}
We employ the $\varepsilon$-greedy switching strategy for the decision-making process \cite{sutton2018reinforcement}. The algorithm consists of two parts:  exploration and exploitation. {\it Exploration} addresses finding new possible solutions, whereas {\it exploitation} addresses utilizing the already known optimal solution. The variable $\varepsilon$ acts as a parameter that weighs the two. Let the per-stage cost be defined by
\begin{equation}\label{eq: per stage cost}
\cost_{\Instant} \coloneqq\state_{\Instant}^{\intercal} Q \state_{\Instant} + \hat{\control}_{\Instant}^{\intercal}R\hat{\control}_{\Instant} + \lambda \switch_{\Instant}^{\intercal} \switch_{\Instant}.
\end{equation}
The switching strategy is defined mathematically as follows:
\begin{equation}\label{eq: switching strategy}
    \switch_{\Instant} = \begin{cases}
        \sim \text{unif}(\{-1,1\}) 
         \hfill \text{if } r_k < \varepsilon \\
         \argmin \limits_{\switch_{t} \in \AdmSwitchSet}  \limsup \limits_{T \to \infty} \frac{1}{T} \expc{  \sum_{t=\Instant}^{T-1} \discfactor^{t}  \cost_{t}}
        & \text{otherwise}
    \end{cases}
\end{equation}
where $r_k \sim \text{unif}\lcrc{0}{1}$ and $\beta \in \loro{0}{1}$ is a discount factor that discounts the cost at future time stages.

When $r_k$ is less than $\varepsilon$ the switching variable $\switch_{\Instant}$ is chosen uniformly randomly from $\{ -1,1\}$. This random selection represents exploration by allowing the system to consider other strategies that might not be optimal for that instance but could provide a better solution over a longer horizon. When $r_{\Instant} \geq \varepsilon$, the switch variable $\switch_{\Instant}$ is determined by the exploitation part, i.e., minimizing the expected cost function. Where the cost function has the following terms:
\begin{enumerate}[label = \roman*)]
\item ${\state_{\Instant}^{\intercal} Q \state_{\Instant}}$, represents a penalty on the state, where $\state_{\Instant}$ is the state at time instant $\Instant$, and $Q \succeq 0$.
\item ${\hat{\control}_{\Instant}^{\intercal} R \hat{\control}_{\Instant}}$ represents the penalty on the control effort, where $\control_{\Instant}$ is the control at time instant $\Instant$, and $R \succ 0$.
\item ${\lambda \switch_{\Instant}^{\intercal} \switch_{\Instant}}$ introduces a penalty for transmission of a packet, with $\lambda$ being a weighing parameter that controls the trade-off between state and control cost versus the transmission cost.
\end{enumerate}
Thus the $\varepsilon$-greedy strategy creates a balance between exploration and exploitation, ensuring that new strategies are explored while utilizing the accumulated information. 
\subsection{Switching Probabilities with $\varepsilon$- Greedy Algorithm}
In this subsection, we discuss in detail the generalized switching probability under the $\varepsilon$- greedy algorithm and analyze the effect of the $\varepsilon$-greedy switching strategy on the MSS of the system. The switching probability distribution function $\switchprob_{g} \in \lcrc{0}{1}^{3}$ is given by
\begin{equation}\label{eq: switch pdf}
    \switchprob_{g} \coloneqq \left( \prob(\switch_{\Instant}=1), \prob(\switch_{\Instant}=0), \prob(\switch_{\Instant}=-1)\right)
\end{equation}
with some switching algorithm $g$.
Given the switching strategy defined in \eqref{eq: switching strategy}, the probabilities of switching states are influenced by the choice of $\varepsilon$. The choice of $\varepsilon$, in turn, decides the balance between exploration and exploitation. With the switching strategy \eqref{eq: switching strategy} the switching probability distribution (denoted as $\switchprob_{\varepsilon}$) can be written as 
\begin{equation}
\begin{aligned}
\switchprob_{\varepsilon} &=\varepsilon \left(\frac{1}{2}, 0 ,\frac{1}{2} \right) + (1-\varepsilon) \left(p_{\Instant}, 1-p_{\Instant}-q_{\Instant} ,q_{\Instant} \right) \\
&=\left( \frac{\varepsilon}{2} + (1-\varepsilon) p_{\Instant}, (1-\varepsilon)(1-p_{\Instant}-q_{\Instant}), \frac{\varepsilon}{2} + (1-\varepsilon) q_{\Instant}  \right). 
\end{aligned}
\end{equation}
Here,
\begin{enumerate}[label=\roman*)]
    \item the first term, $\varepsilon \left(\frac{1}{2}, 0,\frac{1}{2} \right)$, represents the probability distribution in the exploration phase because of the uniform switching between $1$ and $-1$. With probability $\varepsilon$, the switch positions are chosen randomly, with an equal probability ($\frac{1}{2}$) of switching to $1$ or $-1$ and zero probability of remaining in position $0$, and
    \item the second term, $(1-\varepsilon) \left(p, 1-p-q,q \right)$, represents the probability distribution in exploitation phase. With probability $1-\varepsilon$, the switching follows a policy based on the probabilities $p$ and $q$, where:
    \begin{itemize}
        \item $p$ is the probability of switching to position 1,
        \item $q$ is the probability of switching to position $-1$,
         \item $(1-p-q)$ is the probability of remaining in position $0$.
    \end{itemize}
\end{enumerate}
Here, $p,q \in \lcrc{0}{1}$ and $p+q \leq 1$ to ensure that all probabilities sum to 1.
\begin{remark}
    The choice of $\varepsilon$ impacts the MSS of the system which is one of the main interests of this article.
\end{remark}
\subsection{Formulation as a Markovian Jump Linear System}
In this section, we elaborate on the framework of the Markovian Jump Linear System (MJLS) that relates to the system described in \eqref{eq: overall system}. An MJLS is a linear system that goes through random transitions between a finite number of modes, each governed by linear dynamics. The random transitions are governed by Markovian probabilities associated with switching from one mode to another mode. In the problem, the randomness takes place at two levels: the first is at the switching under $\varepsilon$- greedy policy and the second is the random packet drop. With this backdrop, we introduce modes of operation under varying circumstances.

\textbf{System Modes:}
The system can operate in one of several modes at any instance, which is determined by the switch position and the status of packet transmission. These modes represent different scenarios of switching and packet drop. We define the modes of the Markovian switching system as follows:
\begin{equation}\label{eq: mode definition}
    \begin{aligned}
        &\mode 1 : \left( 1, \mathcal{S}, \sysmat_{1} \right),
        \mode 2 : \left(1, \mathcal{F}, \sysmat_{0}  \right)
        \mode 3 : \left(-1, \mathcal{S}, \sysmat_{-1}  \right),\\
        &\mode 4 : \left( -1, \mathcal{F}, \sysmat_{0} \right), \text{ and }
        \mode 5 : \left(0, -, \sysmat_{0} \right)
    \end{aligned}
\end{equation}
where,
\begin{itemize}
    \item the first entry in each tuple indicates the position of the switch, which can be 1,-1, or 0;
    \item the second entry denotes whether the packet transmission was successful ($\mathcal{S}$) or failed ($\mathcal{F}$), and
    \item the third entry corresponds to the system matrix $\sysmat$ that governs the dynamics in that particular mode.
\end{itemize}  
For example, $\mode1$ represents the case where the switch is in position 1, the packet transmission is successful, and the system dynamics are described by the matrix $\sysmat_1$. 
On the other hand, $\mode2$ represents the case where the switch is in position 1 but the packet transmission has failed, thus the system dynamics revert to $\sysmat_0$.

The mode transition probability, $p_{ij}$, represents the likelihood of the system switching from mode $i$ to mode $j$ in the next time step. 
\begin{definition}[Mode Transition Probability] Define the mode transition probability of switching from $i$ to $j$ as,
    $$\prob(\switch_{\Instant+1} = j \vert \switch_{\Instant} = i ) \coloneqq p_{ij}$$
with $i,j \in \{1,2,\hdots, M \}$ where $p_{ij} \in \lcrc{0}{1}$, $\sum_{j=1}^{M} p_{ij} =1$ and $M$ denotes the total number of modes.
\end{definition}
\begin{definition}[Mean Square Stability] \label{def: MSS}
The system \eqref{eq: overall system} is mean square stable if and only if for some $\zeta \geq 1$, $0 < \xi < 1$ and for every $\constate_{0} \in  \R[2\StDim \times \ConDim]$,
    \begin{equation}\label{eq: MSS definition}
        \expc{\constate_{\Instant}^{\intercal} \constate_{\Instant}} \leq \zeta \xi^{\Instant} \constate_{0}^{\intercal} \constate_{0} \quad \text{for all } \Instant \in \Z_{\geq 0}.
    \end{equation}
\end{definition}
{\bf Objective:}
Our goal is to determine the value for $\varepsilon$, given a $\delta \in \lorc{0}{1}$, that ensures the origin of the system \eqref{eq: overall system} remains Mean Square Stable under the switching algorithm \eqref{eq: switching strategy}.
\section{Mean Square Stability of MJLS}\label{sec: MSS necc conditions}
In this section, we propose the methodology used to address the problem, which involves identifying corner cases relevant to switching probabilities. The goal is to determine a value $\Bar{\varepsilon}$, given a fixed $\delta$, which ensures that certain Linear Matrix Inequalities (LMIs) are satisfied for each identified corner case \cite{costa2005discrete}. Note that $\Bar{\epsilon}$ is not a bound but a value that satisfies the LMIs.  Furthermore, we demonstrate that any convex combination of these corner cases also satisfies the MSS conditions derived from the LMIs. The convex combination relates to different switching scenarios in the exploitation phase, implying that the system is MSS irrespective of any switching policy implemented during the exploitation phase. 
\subsection{Corner Cases}
We study this through specific corner cases:
\begin{enumerate}[label =C\arabic*.]
    \item $p=0$ and $q=1$: 
    This case represents the switch being in position $-1$ throughout the entire exploitation phase, indicating that only observations are transmitted. 
    \item $p=1$ and $q=0$:
    This case represents the switch being in position $1$ throughout the entire exploitation phase, indicating that only control signals are transmitted. 
    \item $p=0$ and $q=0$:
    This case depicts the switch remaining in position $0$ for the duration of the exploitation phase, signifying that no transmissions occur. 
\end{enumerate}
The probability of switching in each of the corner cases is tabulated in TABLE \ref{tab: switch prob}. 
\begin{table}[htbp]
    \centering
    \begin{tabular}{c c}
    \toprule
       Case &  Switch probabilities  \\ 
    \midrule
    General & $\left( \frac{\varepsilon}{2} + (1-\varepsilon) p, (1-\varepsilon)(1-p-q), \frac{\varepsilon}{2} + (1-\varepsilon) q  \right)$ \\
         C1 & $\left(\frac{\varepsilon}{2}, 0, 1-\frac{\varepsilon}{2} \right)$  \\
        C2 &  $\left(1-\frac{\varepsilon}{2}, 0,
          \frac{\varepsilon}{2} \right)$  \\
        C3 & $\left(\frac{\varepsilon}{2}, 1- {\varepsilon}, \frac{\varepsilon}{2} \right)$ \\
    \bottomrule
    \end{tabular}
    \caption{Switching probability distribution in different corner cases}
    \label{tab: switch prob}
\end{table}\vspace{-0.5cm}
\subsection{Mode Transition Probabilities for Corner Cases}
Let $\probmat{\ccase} \in \lcrc{0}{1}^{M \times M}$ denote the mode transition probability matrix associated with the corner case $\ccase$, where $\ccase \in \{ 1,2,3\}$. Here, the superscript $\ccase$ is the variable representing each corner case. Each $\probmat{\ccase}_{j} \in \lcrc{0}{1}$ denotes the probability of transitioning from each node to the mode $\mode j$ in the corner case $\ccase$, i.e., $\probmat{\ccase}_{ij} = \probmat{\ccase}_{j} $ for all $i,j \in \{1,2,\hdots,5 \}$. The mode transition probabilities, for each case, are described in Table \ref{tab: mode transition for cases} and depicted in Fig. \ref{fig: packet drop Markov model}. With the given packet transmission success probability $\delta \in \lorc{0}{1}$, we provide the necessary conditions for the MSS of the system using the LMIs given in \cite{costa2005discrete}.
\begin{figure}[!htbp]
    \centering
    \begin{tikzpicture}[
  > = stealth',
  auto,
  prob/.style = {inner sep=1pt, font=\footnotesize},
  node distance=2cm
  ]

  \node[state] (a) {$\mode1$};
  \node[state] (b) [below =of a] {$\mode2$};
  \node[state] (d) [right=of a] {$\mode3$};
  \node[state] (e) [below=of d] {$\mode4$};
 \node[state] (c) [right=of d] {$\mode5$};
 
  \path[->] (b) edge[bend left=15]                  node[prob]{$\probmat{g}_1$} (a)
  edge[loop below] node[prob,red]{$\probmat{g}_2$} (b)
  edge[bend left=15] node[prob,cyan]{$\probmat{g}_3$} (d)
  edge[bend left=15] node[prob,blue]{$\probmat{g}_4$} (e)
  edge[bend left=15] node[prob,brown]{$\probmat{g}_5$} (c);

   \path[->] (a) edge[loop above]                  node[prob]{$\probmat{g}_1$} (a)
  edge[bend left=15] node[prob,red]{$\probmat{g}_2$} (b)
  edge[bend left=15] node[prob,cyan]{$\probmat{g}_3$} (d)
  edge[bend left=15] node[prob,blue]{$\probmat{g}_4$} (e)
  edge[bend left=90] node[prob,brown]{$\probmat{g}_5$} (c);

  \path[->] (d) edge[bend left=15]                  node[prob]{$\probmat{g}_1$} (a)
  edge[bend left=15] node[prob,red]{$\probmat{g}_2$} (b)
  edge[loop above] node[prob,cyan]{$\probmat{g}_3$} (d)
  edge[bend left=15] node[prob,blue]{$\probmat{g}_4$} (e)
  edge[bend left=50] node[prob,brown]{$\probmat{g}_5$} (c);

  \path[->] (e) edge[bend left=15]                  node[prob]{$\probmat{g}_1$} (a)
  edge[bend left=15] node[prob,red]{$\probmat{g}_2$} (b)
  edge[bend left=15] node[prob,cyan]{$\probmat{g}_3$} (d)
  edge[loop below] node[prob,blue]{$\probmat{g}_4$} (e)
  edge[bend left=15] node[prob,brown]{$\probmat{g}_5$} (c);
  
  \path[->] (c) edge[bend right=50]                  node[prob, above]{$\probmat{g}_1$} (a)
  edge[bend left=90] node[prob,red]{$\probmat{g}_2$} (b)
  edge[bend left=15] node[prob,cyan]{$\probmat{g}_3$} (d)
  edge[bend left=15] node[prob,blue]{$\probmat{g}_4$} (e)
  edge[loop below] node[prob,brown]{$\probmat{g}_5$} (c);
\end{tikzpicture}
    \caption{Markov Jump Linear System with associated mode transition probabilities for a general case.}
    \label{fig: packet drop Markov model}
\end{figure}
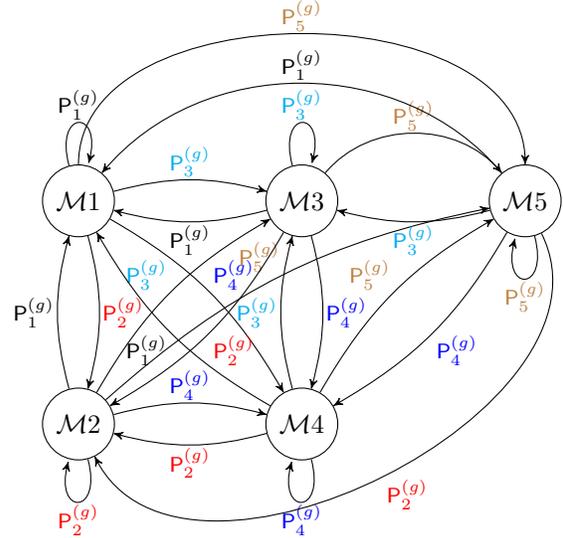

\begin{table*}[htbp]
    \centering
    \begin{tabular}{p{0.05\linewidth} p{0.15\linewidth} p{0.15\linewidth} p{0.15\linewidth} p{0.15\linewidth} p{0.15\linewidth}}
    \toprule 
          & $\probmat{\ccase}_{1} $
          & $\probmat{\ccase}_{2}$ &$\probmat{\ccase}_{3}$ 
          & $\probmat{\ccase}_{4}$ &$\probmat{\ccase}_{5}$  \\ 
    \midrule
         C1 
         & $\delta \frac{\varepsilon}{2}$ 
         & $(1-\delta) \frac{\varepsilon}{2}$ 
         & $\delta  (1 -\frac{\varepsilon}{2})$
         & $(1-\delta)  (1 -\frac{\varepsilon}{2})$
         & $0$  \\
         C2
         &$\delta  (1 -\frac{\varepsilon}{2})$ 
         &$(1- \delta) (1 -\frac{\varepsilon}{2})$
         &$\delta \frac{\varepsilon}{2}$ 
         &$(1-\delta) \frac{\varepsilon}{2}$
         &$0$ \\
        C3
         &$\delta \frac{\varepsilon}{2}$
         &$(1- \delta) (1 -\frac{\varepsilon}{2})$
         &$\delta \frac{\varepsilon}{2}$
         &$(1- \delta) (1 -\frac{\varepsilon}{2})$
         &$1- \varepsilon$ \\
         General         
         &$\delta \left(\frac{\varepsilon}{2} + (1-\varepsilon) p\right)$
         &$(1-\delta)\left(\frac{\varepsilon}{2} + (1-\varepsilon) p\right)$
         & $\delta \left(\frac{\varepsilon}{2} + (1-\varepsilon) q\right)$
         & $(1-\delta) \left(\frac{\varepsilon}{2} + (1-\varepsilon) q\right)$ 
         &$ (1-\varepsilon)(1-p-q)$\\
    \bottomrule
    \end{tabular}
    \caption{Mode transition probabilities for different cases}
    \label{tab: mode transition for cases}
\end{table*}
\begin{theorem}\label{thm: LMI epsilon}
    Given a $\delta \in \lcrc{0}{1}$, with mode transition probability matrix $\probmat{\ccase}$ the origin of the system \eqref{eq: overall system} is MSS if there exist an $\Bar{\varepsilon} \in \loro{0}{1}$ and a symmetric positive definite matrix $V$, such that the following holds
\begin{equation}\label{eq: LMI thm 1}
\begin{aligned}
    \sum_{j=1}^{5} \probmat{\ccase}_{j}  \sysmat_{j}^{\intercal} V \sysmat_{j}  < V 
\end{aligned}
\end{equation}
for all $\ccase \in \{1,2,3\}$.
Furthermore, with this $\varepsilon$, the origin of the system \eqref{eq: overall system} is MSS under the $\varepsilon$-greedy algorithm \eqref{eq: switching strategy}.  
\end{theorem}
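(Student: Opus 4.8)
The plan is to establish the statement in two movements: first, to show that the mode‑transition probability vector produced by the $\varepsilon$‑greedy policy \eqref{eq: switching strategy} in \emph{any} exploitation regime lies in the simplex spanned by the three corner‑case vectors C1--C3 of Table~\ref{tab: mode transition for cases}; and second, to turn the common certificate $V$ into a stochastic Lyapunov function and iterate it. Throughout, $\sysmat_j$ denotes the system matrix attached to mode $\mode j$ in \eqref{eq: mode definition} (so $\sysmat_1$ for $\mode 1$, $\sysmat_0$ for $\mode 2$ and $\mode 4$, $\sysmat_{-1}$ for $\mode 3$, and $\sysmat_0$ for $\mode 5$), matching the indexing in \eqref{eq: LMI thm 1}.

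\textit{Step 1: the corner cases are the vertices.} The admissible exploitation parameters form the $2$‑simplex $\{(p,q): p,q\in\lcrc{0}{1},\ p+q\le1\}$, whose extreme points are exactly $(0,1)$, $(1,0)$ and $(0,0)$, i.e.\ the corner cases C1, C2 and C3. Since every entry of the ``General'' row of Table~\ref{tab: mode transition for cases} is affine in $(p,q)$, reading that row against the rows C1, C2, C3 gives, entrywise,
\[
\probmat{g}_j \;=\; q\,\probmat{1}_j \;+\; p\,\probmat{2}_j \;+\; (1-p-q)\,\probmat{3}_j, \qquad j=1,\dots,5 .
\]
(For $j=1$ this is $q\,\delta\frac{\varepsilon}{2} + p\,\delta(1-\frac{\varepsilon}{2}) + (1-p-q)\,\delta\frac{\varepsilon}{2} = \delta\bigl(\frac{\varepsilon}{2}+(1-\varepsilon)p\bigr)$; the other four entries are one‑line computations using only that the transition probabilities do not depend on the originating mode.) Moreover the effective switch distribution under \eqref{eq: switching strategy} is $(p',\,1-p'-q',\,q')$ with $p'=\frac{\varepsilon}{2}+(1-\varepsilon)p$ and $q'=\frac{\varepsilon}{2}+(1-\varepsilon)q$, which again satisfies $p',q'\ge0$ and $p'+q'\le1$; hence the $\varepsilon$‑greedy mode‑transition vector $\switchprob_\varepsilon$ also lies in this simplex.

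\textit{Step 2: Lyapunov iteration.} Fix $V\succ0$ and $\Bar{\varepsilon}$ as in the hypothesis. Each of the three LMIs in \eqref{eq: LMI thm 1} is strict and there are only three of them, so there is a single $\theta\in\loro{0}{1}$ with $\sum_{j=1}^{5}\probmat{\ccase}_j\sysmat_j^{\intercal}V\sysmat_j\preceq\theta V$ for $\ccase=1,2,3$. By Step 1 and linearity of $P\mapsto\sum_j P_j\sysmat_j^{\intercal}V\sysmat_j$,
\[
\sum_{j=1}^{5}\probmat{g}_j\,\sysmat_j^{\intercal}V\sysmat_j\;\preceq\;\theta V
\]
holds for every admissible $(p,q)$, hence for whatever distribution the exploitation rule selects at any time, including a state‑dependent choice. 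Let $\mathcal{F}_\Instant$ be the natural filtration of the process. Conditioned on $\constate_\Instant$, the realized mode at step $\Instant$ has law $(\probmat{g}_1,\dots,\probmat{g}_5)$ for the $(p_\Instant,q_\Instant)$ chosen by \eqref{eq: switching strategy}, and $\constate_{\Instant+1}=\sysmat_m\constate_\Instant$ on mode $\mode m$, so
\[
\expc{\constate_{\Instant+1}^{\intercal}V\constate_{\Instant+1}\mid\mathcal{F}_\Instant}\;=\;\constate_\Instant^{\intercal}\Bigl(\textstyle\sum_{j=1}^{5}\probmat{g}_j\sysmat_j^{\intercal}V\sysmat_j\Bigr)\constate_\Instant\;\le\;\theta\,\constate_\Instant^{\intercal}V\constate_\Instant .
\]
Taking total expectations and iterating yields $\expc{\constate_\Instant^{\intercal}V\constate_\Instant}\le\theta^{\Instant}\constate_0^{\intercal}V\constate_0$, and then $\lambda_{\min}(V)\,\constate_\Instant^{\intercal}\constate_\Instant\le\constate_\Instant^{\intercal}V\constate_\Instant$ together with $\constate_0^{\intercal}V\constate_0\le\lambda_{\max}(V)\,\constate_0^{\intercal}\constate_0$ gives
\[
\expc{\constate_\Instant^{\intercal}\constate_\Instant}\;\le\;\frac{\lambda_{\max}(V)}{\lambda_{\min}(V)}\,\theta^{\Instant}\,\constate_0^{\intercal}\constate_0 ,
\]
which is Definition~\ref{def: MSS} with $\zeta=\lambda_{\max}(V)/\lambda_{\min}(V)\ge1$ and $\xi=\theta\in\loro{0}{1}$. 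Freezing $(p,q)$ at a corner case gives MSS of that corner case (and recovers the standard MSS certificate of \cite{costa2005discrete} for this destination‑only MJLS); allowing $(p_\Instant,q_\Instant)$ to vary, as the exploitation rule in \eqref{eq: switching strategy} does, gives the ``furthermore'' claim.

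\textit{Main obstacle.} The computations are routine; the point is conceptual. One must recognize that the three ``corner cases'' are precisely the extreme points of the admissible set for $(p,q)$, so that a \emph{single} Lyapunov matrix $V$ valid at the three vertices is automatically valid on the whole simplex — which is why the theorem requires one common $V$ across C1--C3 rather than three separate certificates. One must then be careful that the one‑step contraction is needed \emph{uniformly} over exploitation choices, so that the Lyapunov estimate survives a time‑varying, state‑dependent policy; the uniform constant $\theta\in\loro{0}{1}$, extracted from the strictness of the three LMIs, is exactly what makes this go through.
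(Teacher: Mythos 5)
Your proof is correct, and the first half (writing the general mode-transition vector as the convex combination $\probmat{g}_j = q\,\probmat{1}_j + p\,\probmat{2}_j + (1-p-q)\,\probmat{3}_j$ and exploiting linearity of $P \mapsto \sum_j P_j \sysmat_j^{\intercal} V \sysmat_j$) is essentially identical to the paper's argument. Where you genuinely diverge is in how the LMI is converted into the MSS conclusion: the paper stops once the LMI is verified for the general case and appeals to a corollary of Costa et al.\ for time-homogeneous MJLS, whereas you extract a uniform contraction factor $\theta \in \loro{0}{1}$ from the three strict LMIs and run the stochastic Lyapunov iteration $\expc{\constate_{\Instant+1}^{\intercal} V \constate_{\Instant+1} \mid \mathcal{F}_\Instant} \le \theta\, \constate_\Instant^{\intercal} V \constate_\Instant$ directly, landing on Definition~\ref{def: MSS} with explicit $\zeta = \lambda_{\max}(V)/\lambda_{\min}(V)$ and $\xi = \theta$. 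This buys something real: the exploitation branch of \eqref{eq: switching strategy} is a state-dependent (hence non-Markov-homogeneous, possibly time-varying) selection of $(p_\Instant, q_\Instant)$, which the cited corollary does not literally cover, while your argument is uniform over all admissible $(p,q)$ at every step and therefore survives state-dependent and time-varying exploitation. The one point worth stating explicitly is why a single $\theta<1$ exists (e.g.\ $\theta = \max_{\ccase} \lambda_{\max}\bigl(V^{-1/2}\bigl(\sum_j \probmat{\ccase}_j \sysmat_j^{\intercal} V \sysmat_j\bigr)V^{-1/2}\bigr) < 1$ by strictness and finiteness of the set of corner cases), and why the deterministic argmin in the exploitation phase is covered (it is a degenerate distribution, hence a vertex of the $(p,q)$ simplex); both are one-line remarks, and with them your proof is complete and, if anything, more self-contained than the paper's.
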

\begin{proof}
    First, the MSS conditions given in \cite[Corollary 3.26]{costa2005discrete} are tailored to the specific problem to obtain the LMI \eqref{eq: LMI thm 1}. Suppose there exists an $\Bar{\varepsilon}$ and $V$ such that LMI \eqref{eq: LMI thm 1} hold, then the origin of the system is stable for these corner cases. 
    To prove the origin of the system \eqref{eq: overall system} is MSS under \eqref{eq: switching strategy} with $\varepsilon = \Bar{\varepsilon}$, we prove that the general case can be written as a convex combination of the three corner cases and then prove the LMI holds for the general case. Let $\alpha_{1}, \alpha_{2} \in \lcrc{0}{1}$ and $\alpha_{1}+\alpha_{2}\leq 1$. Taking the convex combination of mode transition probabilities for all corner cases (see TABLE \ref{tab: mode transition for cases}),
\begin{equation}\label{eq: cvx comb}
        \begin{aligned}
           & \alpha_{1} \left(  \delta \frac{\varepsilon}{2} ,
          (1-\delta) \frac{\varepsilon}{2}, 
          \delta  (1 -\frac{\varepsilon}{2}),
          (1-\delta)  (1 -\frac{\varepsilon}{2}),
          0  \right)  \\
          + & \alpha_{2} \left(  \delta  (1 -\frac{\varepsilon}{2}) ,
         (1- \delta) (1 -\frac{\varepsilon}{2}),
         \delta \frac{\varepsilon}{2} ,
         (1-\delta) \frac{\varepsilon}{2},
         0  \right) \\
         + & \left( 1- \alpha_{1} - \alpha_{2} \right) 
         \bigg (  \delta \frac{\varepsilon}{2},
         (1- \delta) (1 -\frac{\varepsilon}{2}),
         \delta \frac{\varepsilon}{2}, \\
         &
         (1- \delta) (1 -\frac{\varepsilon}{2}),
         1- \varepsilon \bigg) \\
          = &
         \bigg( \delta \left(\frac{\varepsilon}{2} + (1-\varepsilon) \alpha_{2}\right),
         (1-\delta)\left(\frac{\varepsilon}{2} + (1-\varepsilon) \alpha_{2}\right),\\
         &
          \delta \left(\frac{\varepsilon}{2} + (1-\varepsilon) \alpha_{1}\right),
          (1-\delta) \left(\frac{\varepsilon}{2} + (1-\varepsilon) \alpha_{1}\right) ,\\
          &
          (1-\varepsilon)(1-\alpha_{2}-\alpha_{1}) 
          \bigg).
        \end{aligned}
    \end{equation}
Comparing \eqref{eq: cvx comb} with the mode transition probability of the general case in TABLE \ref{tab: mode transition for cases}, we have $\alpha_{1} = q$ and $\alpha_{2}=p$.  \\
To prove that \eqref{eq: overall system} is MSS for any $p,q \in  \lcrc{0}{1}$ and $p+q \leq 1$, let $\probmat{g}$ be the general mode transition probability matrix. Then 
\begin{equation*}
    \probmat{g}_{j} = q \probmat{1}_{j} + p \probmat{2}_{j} + (1-q-p)  \probmat{3}_{j}
\end{equation*}
for all $j \in \{1,2,\hdots,5 \}$. 
\begin{equation*}
    \begin{aligned}
        &\sum_{j=1}^{5} \probmat{g}_{j}  \sysmat_{j}^{\intercal} V \sysmat_{j} \\
        =& \sum_{j=1}^{5} \left( q \probmat{1}_{j} + p \probmat{2}_{j} + (1-q-p)  \probmat{3}_{j} \right) \sysmat_{j}^{\intercal} V \sysmat_{j}\\
        =  & q \sum_{j=1}^{5}   \probmat{1}_{j} \sysmat_{j}^{\intercal} V \sysmat_{j} +  p\sum_{j=1}^{5}   \probmat{2}_{j} \sysmat_{j}^{\intercal} V \sysmat_{j} \\
        + & (1-q-p) \sum_{j=1}^{5}   \probmat{3}_{j}  \sysmat_{j}^{\intercal} V \sysmat_{j} \\
        \stackrel{\text{(a)}}{<} & \, q V + p V + (1-q-p) V 
        =  V
    \end{aligned}
\end{equation*}
The inequality (a) is from \eqref{eq: LMI thm 1} and the fact that all quantities on both sides of \eqref{eq: LMI thm 1} are non-negative. 
Hence, if $\Bar{\varepsilon}$ satisfies LMI \eqref{eq: LMI thm 1} for three corner cases then, \eqref{eq: LMI thm 1} is satisfied for any general switching strategy \eqref{eq: switching strategy}.
\end{proof}

To determine the value of $\varepsilon$ that satisfies the LMIs required for MSS, we utilize a method involving Semi-Definite Programming (SDP) solvers and the bisection method.  Based on Theorem \ref{thm: LMI epsilon}, we set up the necessary LMIs involving symmetric positive definite matrix $V$. These LMIs establish the conditions for MSS that the system must satisfy. We employ an SDP solver to numerically solve the formulated LMIs. Given the dependence of $V$ matrix on $\varepsilon$, we apply the bisection method to determine the value of $\varepsilon \, (\Bar{\varepsilon} )$ that satisfies all LMIs. Starting with an initial range for $\varepsilon \in \lcrc{0}{1}$, the bisection method iteratively narrows down to $\Bar{\varepsilon}$ by checking the existence of solutions of LMIs at midpoints within the range.
\section{Optimal Switching Strategy} \label{sec: MSS optimal}
We intend to find an optimal switching policy that satisfies the necessary conditions for MSS established earlier and minimizes the average cost involving penalties on state, control, and communication. That is, to find
\begin{equation}\label{eq: avg cost problem}
\begin{aligned}
    & \min_{\{\switch_{\Instant}\}}  &&  \limsup_{T \to \infty} \frac{1}{T} \expc{  \sum_{\Instant=0}^{T-1}  \cost_{\Instant}},
\end{aligned}
\end{equation}
which can be written as a discounted cost problem 
\begin{equation}\label{eq: discounted cost problem}
\begin{aligned}
    & \min_{\{\switch_{\Instant}\}}  &&  \limsup_{T \to \infty} \frac{1}{T} \expc{  \sum_{\Instant=0}^{T-1} \discfactor^{\Instant}  \cost_{\Instant}}
\end{aligned}
\end{equation}
using \cite[Lemma 5.3.1]{hernandez2012discrete}.
This transformation allows the application of reinforcement learning techniques for solutions. 
\subsection{$\qfactor$-Learning}
The sequential decision-making problem is treated as a Markov Decision Process (MDP). The MDP has three components: i) state, ii) action, and iii) reward. The state space consists of $\{ \state_{\Instant} \} \in \R[\StDim]$ and the action space is $\switch_{\Instant} \in \{-1,0, 1 \}$. Define per-stage reward as
\begin{equation}
    \label{eq: reward per-stage}
    \reward_\Instant \coloneqq -\left( \state_{\Instant}^{\intercal} Q \state_{\Instant} + \hat{\control}_{\Instant}^{\intercal}R\hat{\control}_{\Instant} + \lambda \switch_{\Instant}^{\intercal} \switch_{\Instant} \right).
\end{equation}
The system transitions from state $\state_{\Instant}$ to $\state_{\Instant+1}$ based on action $\switch_{\Instant}$ and receives reward $\reward_{\Instant+1}$. To evaluate the performance of a given policy, we use the action value function, denoted as $\qfactor_\policy(\state,\switch)$. For a policy $\policy$, which maps states to probabilities of selecting actions, the action value function is defined as:
\begin{equation}
    \label{eq: q-factor}
    \qfactor_{\policy}(\state, \switch) = \expc{ \sum_{\Instant=0}^{T-1} \discfactor^{\Instant} \reward_{\Instant+1} | \state, \switch}.
\end{equation}
The optimal $\qfactor$ value (denoted as $\qfactor_{*}(\state_{\Instant}, \switch_{\Instant})$) satisfies Bellman's principle of optimality:
\begin{equation}\label{eq: Bellman}
    \qfactor_{*}(\state_{\Instant}, \switch_{\Instant}) = \expc{\reward_{\Instant} + \discfactor \qfactor_{*}(\state_{\Instant+1}, \switch_{\Instant+1})}.
\end{equation}
The optimal policy can be found iteratively in the case of discrete sets of states and actions by the value iteration method \cite{sutton2018reinforcement}. In this problem, the state space is continuous, and this method cannot be directly applied. Hence, we use an advanced deep reinforcement learning technique using a neural network.
\subsection{Deep $\qfactor$-Learning}
We approach the solution of the problem \eqref{eq: discounted cost problem} using deep reinforcement learning \cite{mnih2015human}. A Deep Q Network (DQN) estimates the $\qfactor$-values for each state and action. Thereby approximates the function $\qfactor(\state,\switch; \theta)$, where the function is parameterized by $\theta$. The state acts as the input to the DQN, and the output corresponding to each action is the approximated $\qfactor$-value. The loss is derived using Bellman's principle as
\begin{equation}
 \label{eq: DQN loss}
 \text{loss} \coloneqq \left[ \qfactor(\state_{\Instant}, \switch_{\Instant}; \theta) - \left({\reward_{\Instant} + \discfactor \qfactor(\state_{\Instant+1}, \switch_{\Instant+1}; \theta)} \right) \right]^{2}.
\end{equation}
Here, two forward passes are made through the network to get the $\qfactor$-values for the current and next states. To avoid this, Mnih et al. introduced the replay memory framework and another target network replicating the actual network but updated only after a few iterations \cite{mnih2015human}. Algorithm 1 presents the deep reinforcement learning technique used.
\begin{algorithm}
  \caption{Deep reinforcement learning}
  \begin{algorithmic}[1]
  \STATE {Initialize the replay memory $D$ to capacity $K$}
  \STATE Initialize the policy network $Q$ with random weights $\theta_{0}$
  \STATE Copy the policy network as the target network $\hat{Q}$ with weights $\theta^{-} = \theta_{0}$
  \STATE Initialize $\state_{0}$
  \FOR{$\Instant= 0,1, \hdots, N$}
  \STATE Select an action ($\switch_{\Instant}$ via switching algorithm \eqref{eq: switching strategy} with the $\varepsilon$ based on Theorem \ref{thm: LMI epsilon})
  \STATE Store the experience $e_{\Instant} \coloneqq (\switch_{\Instant}, \state_{\Instant}, \reward_{\Instant},  \state_{\Instant+1})$ in the replay memory 
  \STATE Sample random batch from the replay memory
  \STATE Calculate the loss between output $\qfactor$-values and target $\qfactor$-values as \eqref{eq: DQN loss}
  \STATE Update the policy network using the backpropagation
  \STATE Update the target network after a predefined number of steps
  \ENDFOR
  \end{algorithmic}
\end{algorithm}
\section{Numerical Experiments}\label{sec: MSS numerical}
In this section, we validate our approach on system \eqref{eq: overall system} with the following data:
\begin{equation*}
    A = \begin{pmatrix}  1 &0.1 \\ 0 & 1 \end{pmatrix}, \, B = \begin{pmatrix}
        0 \\ 1
    \end{pmatrix}, \text{ and } K = \begin{pmatrix}
        -0.012 \\ -0.07
    \end{pmatrix}^{\intercal}
\end{equation*}
where the controller gain $K$ stabilizes as per Assumption 2.

First, we fix the packet transmission success probability $\delta \in \lorc{0}{1}$, which is typically determined by the communication system. For a given $\delta$, we determine the corresponding value of $\varepsilon$ ($\Bar{\varepsilon}$) that ensures MSS for all corner cases. The relationship between the packet transmission success probability $\delta$ and the $\bar{\varepsilon}$ required for ensuring MSS is depicted in Fig. \ref{fig: eps-delta}.
\begin{figure}[!htbp]
    \centering
\includegraphics[width=0.48\textwidth]{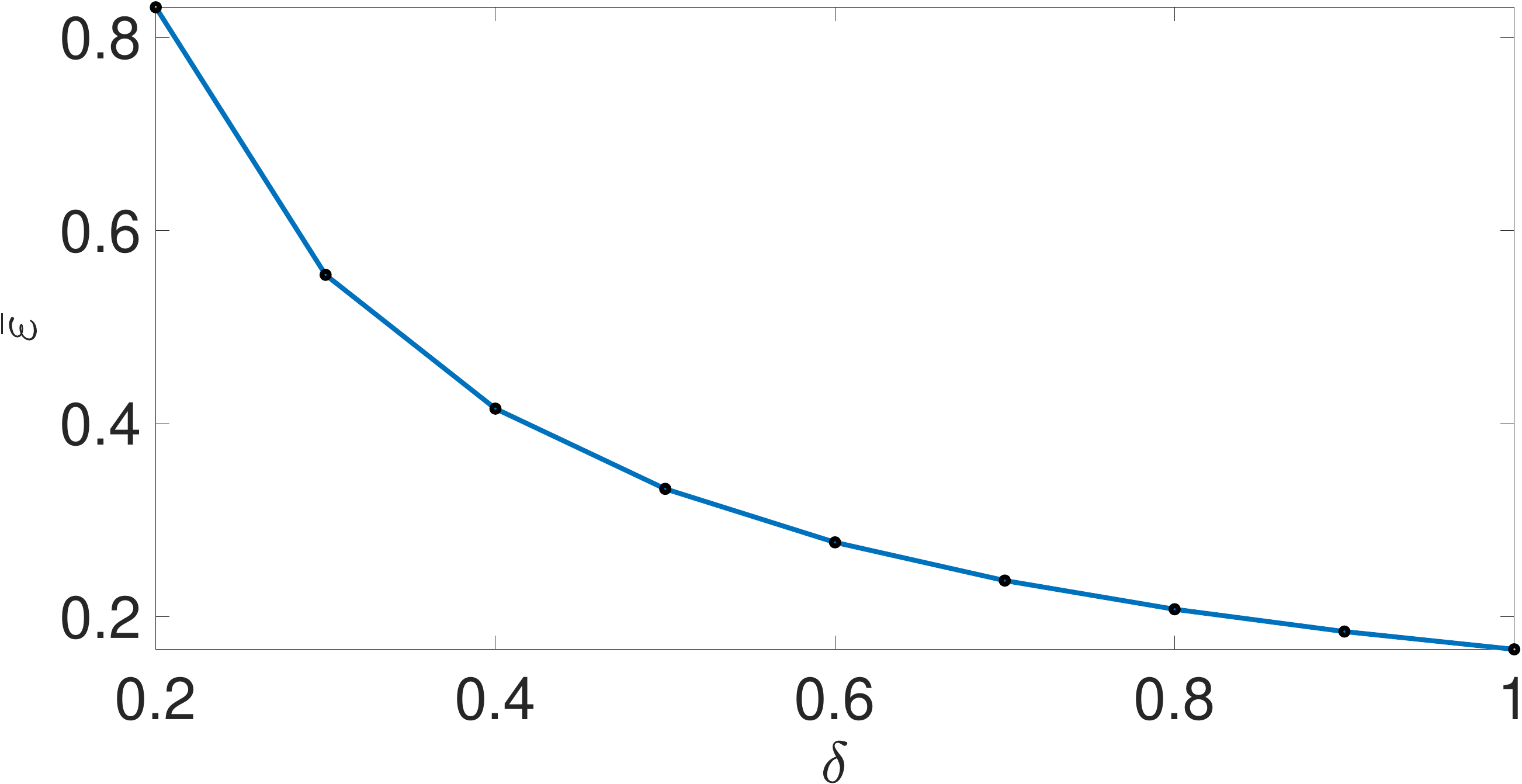}
    \caption{ $\Bar{\varepsilon}$ that satisfies MSS conditions for all three corner cases for different values of $\delta$ }
    \label{fig: eps-delta}
\end{figure}
Next, for a fixed packet transmission success probability $\delta$, we determine the corresponding value of $\varepsilon$ ($\Bar{\varepsilon}$) that ensures the MSS of the system. We then illustrate the associated corner cases and the convex region formed by these corner cases. 
We implement the Deep Q-Network according to Algorithm 1 with the following specifications. The initial state conditions are randomly sampled from the uniform distribution within the $[-10,10]^{2}$ range. A replay memory size of $1000$ is utilized to store experiences. The policy network architecture consists of two hidden layers, with $1024$ and $256$ neurons, respectively. Since there are three possible actions, the output layer consists of three neurons. Mini-batches of size $32$ are used for training. We fix $\delta = 0.8$ and the exploration parameter is set to $ \Bar{\varepsilon} =0.2$. The learning rate is fixed at $0.001$. The network undergoes training for $800$ episodes. The moving average of reward in the training phase shows the DQN's learning over several episodes and then the learning settles (see Fig. \ref{fig: training reward}).
\begin{figure}[!htbp]
    \centering
    \includegraphics[width=0.5\textwidth]{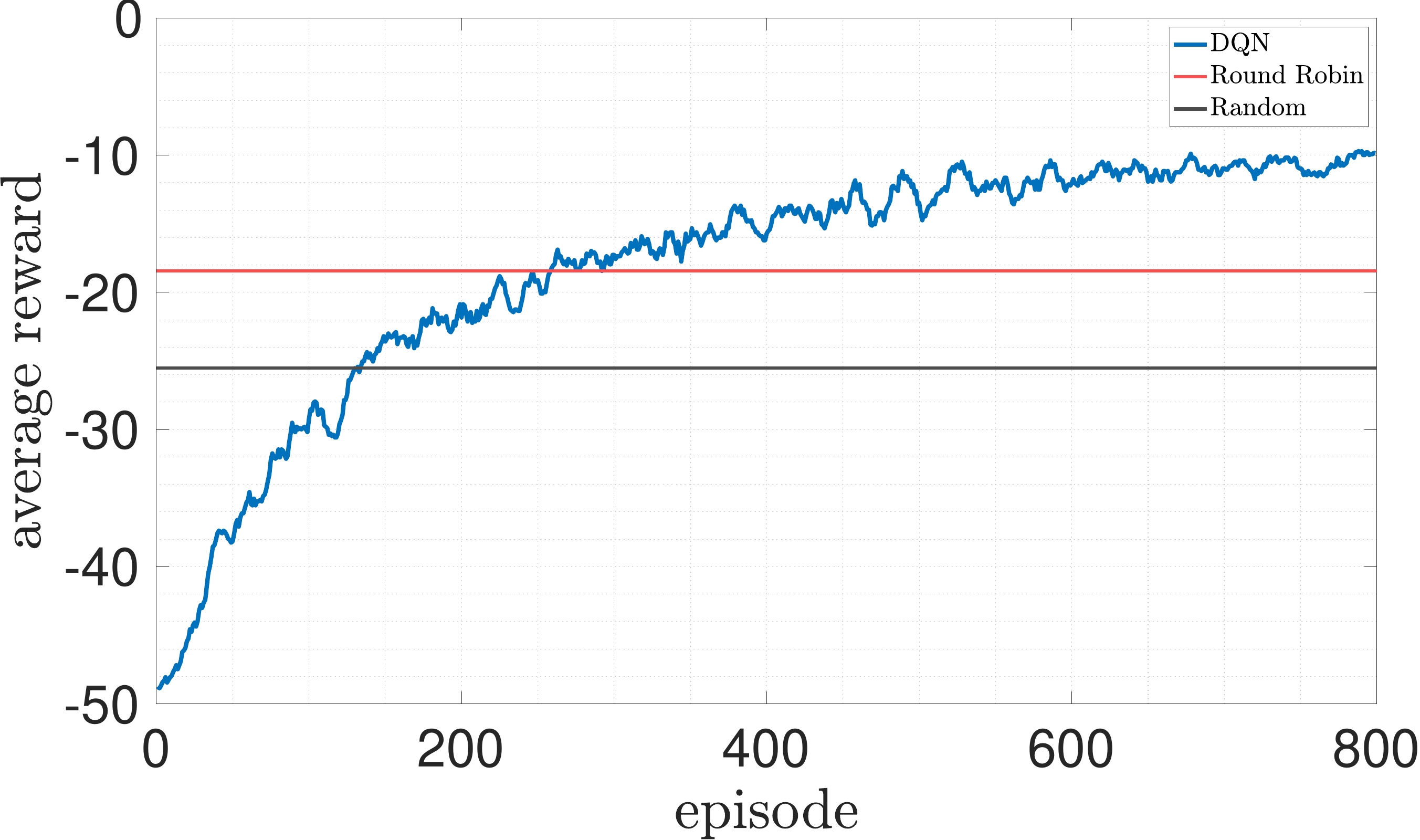}
    \caption{Comparison of average reward ($\lambda=0.5$) of DQN method with round robin and random switching scheme}
    \label{fig: training reward}
\end{figure}
To assess the efficacy of our approach, we compare our results with two other methods: round-robin and random selection. Each method undergoes testing under identical initial conditions and parameters. The DQN method performs better than the other two methods. The round-robin method yields an average reward of $-18.51$, while the random method results in an average reward of $-25.6$. These comparisons underscore the effectiveness of the DQN after about $250$ episodes of training.
In this subsection, we analyze the MSS property with the proposed $\varepsilon$- greedy algorithm, under different packet transmission success probability and corresponding $\varepsilon$ found using Theorem \ref{thm: LMI epsilon}. Figure \ref{fig: low delta high eps} represents the case with low transmission success probability that requires a high value of $\varepsilon$ to be MSS. Conversely, with low transmission success probability and low value of $\varepsilon$, we can observe in Figure \ref{fig: low delta low eps} that the system is not MSS. As depicted in Figure \ref{fig: high delta low eps}, with high transmission success probability, a low value of $\varepsilon$ is sufficient for achieving MSS.
\begin{figure}[htbp]
\centering
\includegraphics[width=0.48\textwidth]{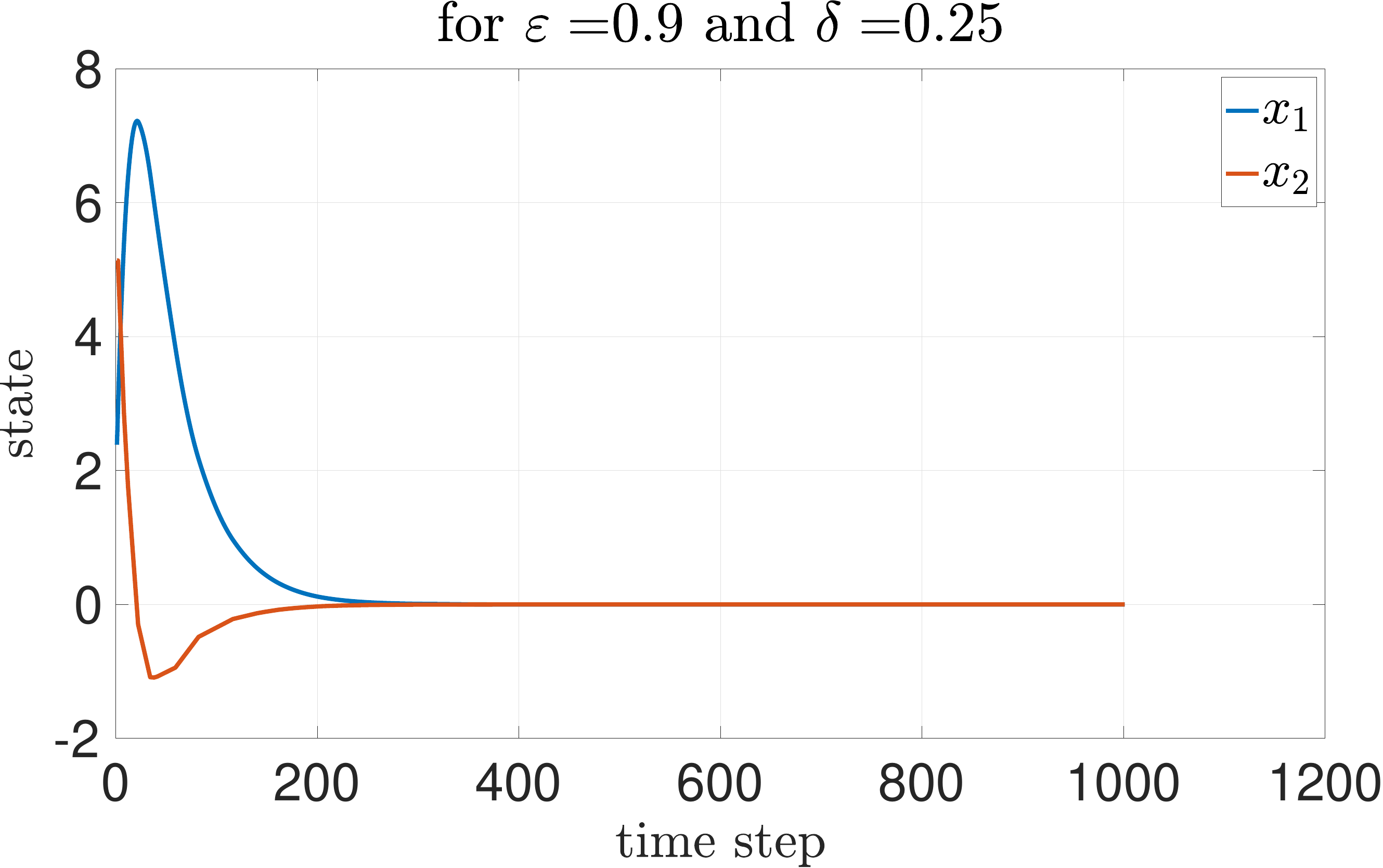}
\caption{Low success probability, with high $\varepsilon$}
\label{fig: low delta high eps}
\end{figure}
\begin{figure}[htbp]
\centering
\includegraphics[width=0.48\textwidth]{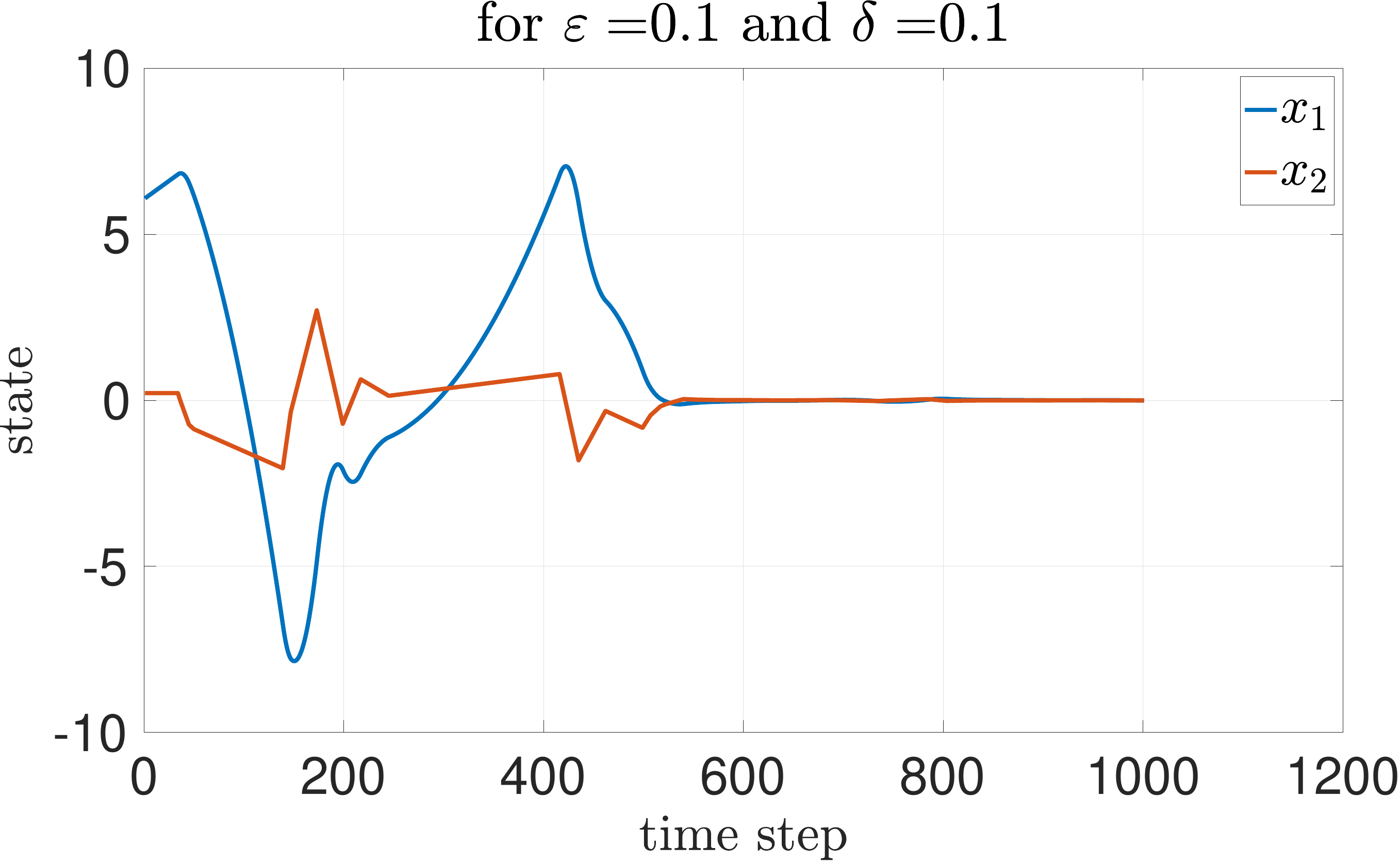}
\caption{Low success probability, with low $\varepsilon$}
\label{fig: low delta low eps}
\end{figure}
\begin{figure}[htbp]
\centering
\includegraphics[width=0.48\textwidth]{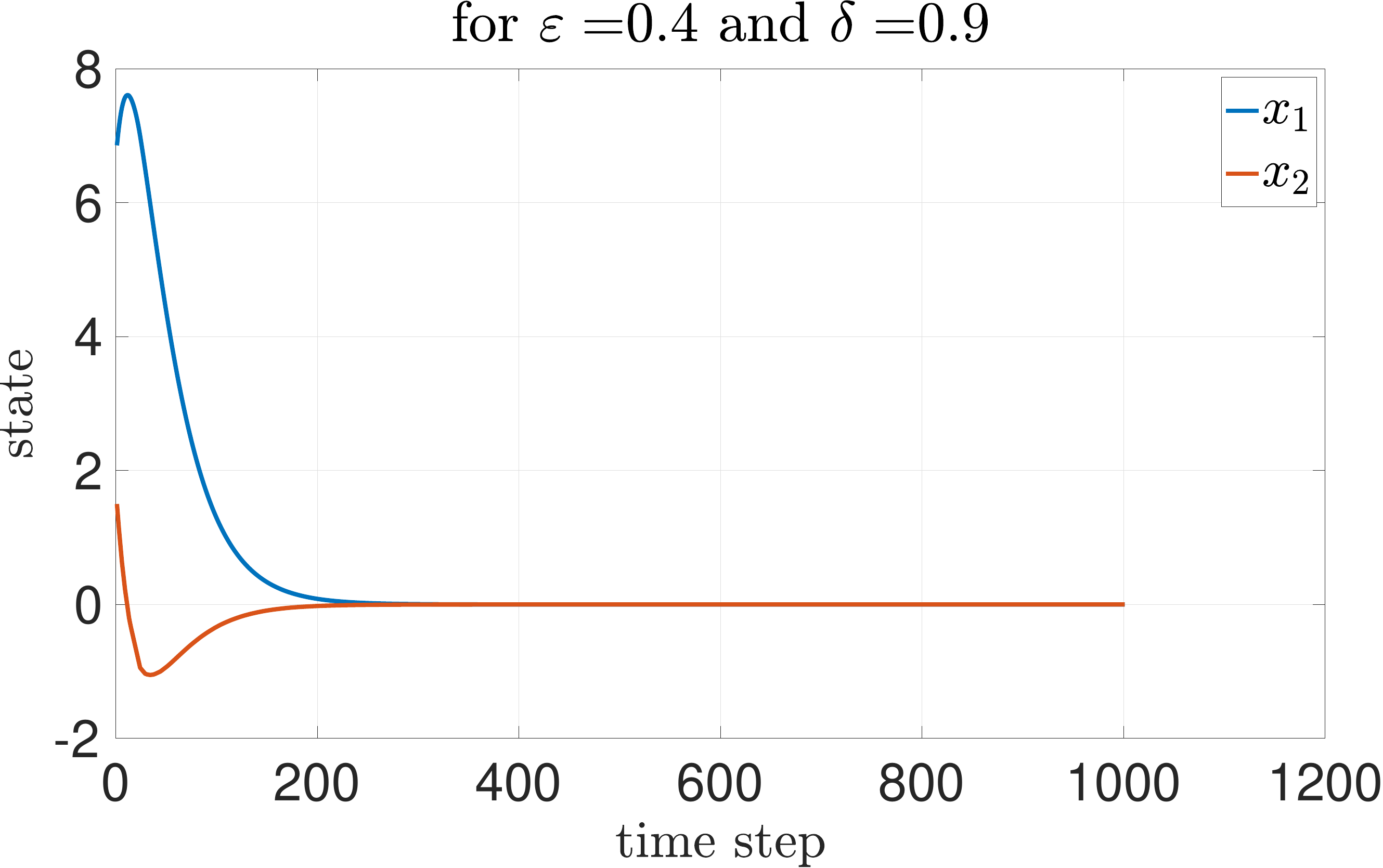}
\caption{High success probability, with low $\varepsilon$}
\label{fig: high delta low eps}
\end{figure}
\section{Conclusion}
In this letter, we have proposed a modified $\varepsilon$-greedy algorithm for selecting communication direction in a multiplexed NCS. We have established the necessary conditions for the mean square stability (MSS) of an NCS with multiplexed communication and packet drops. We provided an optimal switching policy using Deep Q-Learning that utilizes the proposed $\varepsilon$-greedy algorithm and ensures MSS for training the neural network.

\addtolength{\textheight}{-12cm}  



%


\bibliographystyle{ieeetr}
\bibliography{ref}

\begin{thebibliography}{10}

\bibitem{bemporad2010networked}
A.~Bemporad, M.~Heemels, M.~Johansson, {\em et~al.}, {\em Networked control systems}, vol.~406.
\newblock Springer, 2010.

\bibitem{ge2017distributed}
X.~Ge, F.~Yang, and Q.-L. Han, ``Distributed networked control systems: A brief overview,'' {\em Information Sciences}, vol.~380, pp.~117--131, 2017.

\bibitem{elia2004bode}
N.~Elia, ``When bode meets shannon: Control-oriented feedback communication schemes,'' {\em IEEE transactions on Automatic Control}, vol.~49, no.~9, pp.~1477--1488, 2004.

\bibitem{heemels2010networked}
W.~M.~H. Heemels, A.~R. Teel, N.~Van~de Wouw, and D.~Ne{\v{s}}i{\'c}, ``Networked control systems with communication constraints: Tradeoffs between transmission intervals, delays and performance,'' {\em IEEE Transactions on Automatic control}, vol.~55, no.~8, pp.~1781--1796, 2010.

\bibitem{dolk2017event}
V.~Dolk and M.~Heemels, ``Event-triggered control systems under packet losses,'' {\em Automatica}, vol.~80, pp.~143--155, 2017.

\bibitem{sandberg2022secure}
H.~Sandberg, V.~Gupta, and K.~H. Johansson, ``Secure networked control systems,'' {\em Annual Review of Control, Robotics, and Autonomous Systems}, vol.~5, pp.~445--464, 2022.

\bibitem{athans1977uncertainty}
M.~Athans, R.~Ku, and S.~Gershwin, ``The uncertainty threshold principle: Some fundamental limitations of optimal decision making under dynamic uncertainty,'' {\em IEEE Transactions on Automatic Control}, vol.~22, no.~3, pp.~491--495, 1977.

\bibitem{imer2004optimal}
O.~C. Imer, S.~Yuksel, and T.~Ba{\c{s}}ar, ``Optimal control of dynamical systems over unreliable communication links,'' {\em IFAC Proceedings Volumes}, vol.~37, no.~13, pp.~991--996, 2004.

\bibitem{schenato2007foundations}
L.~Schenato, B.~Sinopoli, M.~Franceschetti, K.~Poolla, and S.~S. Sastry, ``Foundations of control and estimation over lossy networks,'' {\em Proceedings of the IEEE}, vol.~95, no.~1, pp.~163--187, 2007.

\bibitem{liu2004kalman}
X.~Liu and A.~Goldsmith, ``Kalman filtering with partial observation losses,'' in {\em 2004 43rd IEEE Conference on Decision and Control (CDC)(IEEE Cat. No. 04CH37601)}, vol.~4, pp.~4180--4186, IEEE, 2004.

\bibitem{sinopoli2004kalman}
B.~Sinopoli, L.~Schenato, M.~Franceschetti, K.~Poolla, M.~I. Jordan, and S.~S. Sastry, ``Kalman filtering with intermittent observations,'' {\em IEEE transactions on Automatic Control}, vol.~49, no.~9, pp.~1453--1464, 2004.

\bibitem{you2011mean}
K.~You, M.~Fu, and L.~Xie, ``Mean square stability for kalman filtering with markovian packet losses,'' {\em Automatica}, vol.~47, no.~12, pp.~2647--2657, 2011.

\bibitem{mesquita2012redundant}
A.~R. Mesquita, J.~P. Hespanha, and G.~N. Nair, ``Redundant data transmission in control/estimation over lossy networks,'' {\em Automatica}, vol.~48, no.~8, pp.~1612--1620, 2012.

\bibitem{varma2023transmission}
V.~S. Varma, R.~Postoyan, D.~E. Quevedo, and I.-C. Morărescu, ``Event-triggered transmission policies for nonlinear control systems over erasure channels,'' {\em IEEE Control Systems Letters}, vol.~7, pp.~2113--2118, 2023.

\bibitem{maity2021optimal}
D.~Maity, M.~H. Mamduhi, S.~Hirche, and K.~H. Johansson, ``Optimal lqg control of networked systems under traffic-correlated delay and dropout,'' {\em IEEE Control Systems Letters}, vol.~6, pp.~1280--1285, 2021.

\bibitem{zhang2012network}
L.~Zhang, H.~Gao, and O.~Kaynak, ``Network-induced constraints in networked control systems—a survey,'' {\em IEEE transactions on industrial informatics}, vol.~9, no.~1, pp.~403--416, 2012.

\bibitem{molin2009lqg}
A.~Molin and S.~Hirche, ``On lqg joint optimal scheduling and control under communication constraints,'' in {\em Proceedings of the 48h IEEE Conference on Decision and Control (CDC) held jointly with 2009 28th Chinese Control Conference}, pp.~5832--5838, IEEE, 2009.

\bibitem{leong2020deep}
A.~S. Leong, A.~Ramaswamy, D.~E. Quevedo, H.~Karl, and L.~Shi, ``Deep reinforcement learning for wireless sensor scheduling in cyber--physical systems,'' {\em Automatica}, vol.~113, p.~108759, 2020.

\bibitem{sutton2018reinforcement}
R.~S. Sutton and A.~G. Barto, {\em Reinforcement learning: An introduction}.
\newblock MIT press, 2018.

\bibitem{costa2005discrete}
O.~L.~V. Costa, M.~D. Fragoso, and R.~P. Marques, {\em Discrete-time Markov jump linear systems}.
\newblock Springer Science \& Business Media, 2005.

\bibitem{hernandez2012discrete}
O.~Hern{\'a}ndez-Lerma and J.~B. Lasserre, {\em Discrete-time Markov control processes: basic optimality criteria}, vol.~30.
\newblock Springer Science \& Business Media, 2012.

\bibitem{mnih2015human}
V.~Mnih, K.~Kavukcuoglu, D.~Silver, A.~A. Rusu, J.~Veness, M.~G. Bellemare, A.~Graves, M.~Riedmiller, A.~K. Fidjeland, G.~Ostrovski, {\em et~al.}, ``Human-level control through deep reinforcement learning,'' {\em nature}, vol.~518, no.~7540, pp.~529--533, 2015.

\end{thebibliography}

\end{document}